\documentclass[twocolumn,final,twoside,journal]{IEEEtran}

\usepackage{tabu}
\usepackage[usenames,dvipsnames]{xcolor}

\usepackage{amsmath,amsthm,graphicx,cite}
\usepackage{srcltx}
\usepackage{epsfig,amsfonts,subfigure}
\usepackage{graphicx,cite,amssymb,amsmath}
\usepackage{color}

\usepackage{amsthm}
\usepackage[nolist]{acronym}
\usepackage{psfrag}
\usepackage{perso}
%
%
\usepackage{pgfplots}
 \pgfplotsset{compat=newest}
    \pgfplotsset{plot coordinates/math parser=false}
    \pgfplotsset{
    label style={anchor=near ticklabel},
    xlabel style={yshift=0.0em},
    ylabel style={yshift=-0.3em},
    tick label style={font=\footnotesize },
    label style={font=\footnotesize},
    legend style={font=\footnotesize},
    title style={font=\fontsize{7}}}
\usepackage{xcolor}
\definecolor{iso}{rgb}{0.7,0.7,0.7}
\usepackage{blindtext}
\usepackage{flushend}
\usepackage{relsize}
\usepackage{mathtools}
\mathtoolsset{showonlyrefs}

\usepgflibrary{arrows}
\usetikzlibrary{patterns}
\usepgflibrary{decorations.pathmorphing}



\usepackage{soul}

\newcommand{\rosymb}{y} 

\newcommand{\absoverhead}{\delta}

\newcommand{\Pf } { \mathsf{P}_{\mathsf{F}}}

\newcommand{\barPf } { \bar {\mathsf{P}}_{\mathsf{F}}}

\newcommand{\Grx}{\tilde{\mathbf{G}}}

\ifthenelse{\isundefined{\Bmatrix}}{
  \newcommand{\Bmatrix}{\mathbf{B}}
}{
  \renewcommand{\Bmatrix}{\mathbf{B}}
}
\renewcommand{\Bmatrix}{\mathbf{B}}

\ifthenelse{\isundefined{\C}}{
    \newcommand{\C}[1]{\mathtt{C}_{#1}}
    }{
    \renewcommand{\C}[1]{\mathtt{C}_{#1}}
    }

\ifthenelse{\isundefined{\N}}{
    \newcommand{\N}[1]{\mathsf{N}_{#1}}
    }{
    \renewcommand{\N}[1]{\mathsf{N}_{#1}}
    }

\newcommand{\n}{\mathtt{n}}

\renewcommand{\nu}{\n_u}

\newcommand{\dmax}{ d_{\max}}

\newcommand{\x}{\mathtt{x}}

\ifthenelse{\isundefined{\m}}{
    \newcommand{\m}{m}
    }{
    \renewcommand{\m}{m}
    }

\newcommand{\Raptorinput}{u}
\newcommand{\vecu}{\mathbf{\Raptorinput}}

\newcommand{\Rintermsymbol}{v}
\newcommand{\vecv}{\mathbf{\Rintermsymbol}}

\newcommand{\Rosymb}{c}

\newcommand{\precodegeneric}{\mathcal{C}}



\newcommand{\GrxLT}{\Grx_{LT}}
\newcommand{\GLT}{\G_{\text{LT}}}

\newcommand{\Gp}{\G_{\text{o}}}

\newcommand{\Rrosymb}{\rosymb}

\newcommand{\we}{A}
\newcommand{\weo}{A}

\newcommand{\weoensemble}{\mathsf{A}}

\newcommand{\pil}{\pi_{\l}}

\renewcommand{\l}{l}

\ifthenelse{\isundefined{\g}}{
    \newcommand{\g}{\mathsf g}
    }{
    \renewcommand{\g}{\mathsf g}
    }

\ifthenelse{\isundefined{\G}}{
    \newcommand{\G}{\mathbf{G}}
    }{
    \renewcommand{\G}{\mathbf{G}}
    }

\newcommand{\Exp}{\mathbb{E}}

\newtheorem{theorem}{Theorem}

\newtheorem{lemma}{Lemma}


\ifthenelse{\isundefined{\A}}{
    \newcommand{\A}{\we}
    }{
    \renewcommand{\A}{\we}
    }

\ifthenelse{\isundefined{\argmax}}{
\newcommand{\argmax}{{\arg\max}}
}{
\renewcommand{\argmax}{{\arg\max}}
}

\newcommand{\krawt}{\mathcal{K}}

\newcommand{\rank}{\mathsf{rank}}

\newcommand{\pifroml}{\vartheta_{i,l,j} }
\newcommand{\qi}{\varphi_i }

\begin{document}
\begin{acronym}
\acro{BEC}{binary erasure channel}
\acro{DFT}{discrete Fourier transform}
\acro{$q$-EC}{$q$-ary erasure channel}
\acro{WE}{weight enumerator}
\acro{WEF}{weight enumerator function}
\acro{IOWEF}{input output weight enumerator function}
\acro{IOWE}{input output weight enumerator}
\acro{LT}{Luby Transform}
\acro{BP}{belief propagation}
\acro{ML}{maximum likelihood}
\acro{MDS}{maximum distance separable}
\acro{LDPC}{low density parity check}
\acro{i.i.d.}{independent and identically distributed}
\end{acronym}

\title{Bounds on the Error Probability of Raptor Codes}

\author{
    \IEEEauthorblockN{Francisco L\'azaro\IEEEauthorrefmark{1}, Gianluigi Liva\IEEEauthorrefmark{1}, Enrico Paolini\IEEEauthorrefmark{2}, Gerhard Bauch\IEEEauthorrefmark{3}}\\
    \IEEEauthorblockA{\IEEEauthorrefmark{1}Institute of Communications and Navigation of DLR (German Aerospace Center),
    \\Wessling, Germany. Email: \{Francisco.LazaroBlasco,Gianluigi.Liva\}@dlr.de}\\
    \IEEEauthorblockA{\IEEEauthorrefmark{2}{CNIT}, {DEI}, University of Bologna,
    \\Cesena, Italy. Email: e.paolini@unibo.it}\\
    \IEEEauthorblockA{\IEEEauthorrefmark{3}Institute for Telecommunication, Hamburg University of Technology
    \\Hamburg, Germany. Email: Bauch@tuhh.de}
    \thanks{This work has been accepted for publication at IEEE Globecom 2017.}
\thanks{\copyright 2016 IEEE. Personal use of this material is permitted. Permission
from IEEE must be obtained for all other uses, in any current or future media, including
reprinting /republishing this material for advertising or promotional purposes, creating new
collective works, for resale or redistribution to servers or lists, or reuse of any copyrighted
component of this work in other works}
}
\maketitle



\thispagestyle{empty} \pagestyle{empty}

\begin{abstract}
In this paper $q$-ary Raptor codes under \acs{ML} decoding are considered.
An upper bound on the probability of decoding failure is derived using the weight enumerator of the outer code, or its expected weight enumerator if the outer code is drawn randomly from some ensemble of codes. The bound is shown to be tight by means of simulations. This bound provides a new insight into Raptor codes since it shows how Raptor codes can be analyzed similarly to a classical fixed-rate serial concatenation.
\end{abstract}




\section{Introduction}\label{sec:Intro}

Fountain codes \cite{byers02:fountain} are a class of erasure codes that have the property of being rateless. Thus, they are potentially able to generate an endless amount of encoded (or output) symbols.  This property makes them suitable for application in situations where the channel erasure rate is not a priori known.
The first class of practical fountain codes, \ac{LT} codes, was introduced in \cite{luby02:LT} together with an iterative  decoding algorithm that achieves a good performance when the number of input symbols $k$ is large. In \cite{luby02:LT} it was shown how in order to achieve a low probability of decoding error, the encoding and iterative decoding cost per output symbol is $O \left(\ln(k)\right)$.

Raptor codes were introduced in \cite{shokrollahi06:raptor} and outperform \ac{LT} codes in many aspects. They consist of a serial concatenation of an outer code $\mathcal C$ (or precode) with an inner \ac{LT} code. On erasure channels, this construction allows relaxing the design of the \ac{LT} code, requiring only the recovery of a fraction $1-\gamma$ of the input symbols with $\gamma$ small. This can be achieved with linear encoding complexity and also linear decoding complexity using iterative decoding. The outer code is responsible for recovering the remaining fraction of input symbols, $\gamma$. If the outer code $\mathcal C$ is linear-time encodable and decodable then the Raptor code has linear encoding and iterative decoding complexity over erasure channels.

Most of the existing works on \ac{LT} and Raptor codes consider iterative decoding and assume large input block lengths ($k$ at least in the order of a few tens of thousands). However, in practice, smaller values of $k$ are more commonly used. For example, for the binary Raptor codes standardized in \cite{MBMS16:raptor} and \cite{luby2007rfc} the recommended values of $k$ range from $1024$ to $8192$. For these input block lengths, iterative decoding performance degrades considerably. In this context,
a different decoding algorithm is adopted that is an efficient \ac{ML} decoder, in the form of inactivation decoding \cite{shokrollahi2005systems}.

An inactivation decoder solves a system of equations in several stages. First a set of variables is declared \emph{inactive}. Next a system of equations involving the set of inactive variables needs to be solved, for example using Gaussian elimination. Finally, once the value of the inactive variables is known, all other variables are recovered using iterative  decoding.

Recently there have been several works addressing the complexity of inactivation decoding for Raptor and \ac{LT} codes \cite{lazaro:ITW,lazaro:scc2015,lazaro:Allerton2015,mahdaviani2012raptor}.
The probability of decoding failure of \ac{LT} and Raptor codes under \ac{ML} decoding has also been subject of study in several works. In \cite{Rahnavard:07}  upper and lower bounds to the intermediate symbol erasure rate were derived for \ac{LT} codes and Raptor codes with outer codes in which every element of the parity check matrix is \ac{i.i.d.} Bernoulli random variables with parameter $p$.
This work was extended in \cite{schotsch:2013}, where lower an upper bounds to the performance of \ac{LT} codes under \ac{ML} decoding were derived. A further extension was presented in \cite{Schotsch:14}, where an approximation to the performance of Raptor codes under \ac{ML} decoding is derived under the assumption that the number of erasures correctable by the outer code is small. Hence, this approximation holds only if the rate of the outer code is sufficiently high.
In \cite{Liva10:fountain} it was shown by means of simulations how the error probability of $q$-ary Raptor codes is very close to that of linear random fountain codes.
In \cite{wang:2015} upper and lower bounds to the probability of decoding failure of Raptor codes were derived. The outer codes considered in \cite{wang:2015} are binary linear random codes with a systematic encoder. Recently, ensembles of Raptor codes with linear random outer codes were also studied in a fixed-rate setting in \cite{lazaro:ISIT2015},\cite{lazaro:JSAC}. Although a number of works has studied the probability of decoding failure of Raptor codes, to the best of the knowledge of the authors, up to now the results hold only for specific binary outer codes (see \cite{Rahnavard:07,wang:2015,lazaro:ISIT2015,lazaro:JSAC}).

In this paper an upper bound on the probability of decoding failure of Raptor codes is derived, based on the weight enumerator of their outer codes. The bound is extended to ensembles of Raptor codes where the outer code is drawn randomly from an ensemble. In this case, it is necessary to know the average weight enumerator for the outer code ensemble. By means of simulations, the derived bound is shown to be tight, specially in the error floor region, for Raptor codes with Hamming and linear random outer codes. In contrast to \cite{Rahnavard:07,wang:2015,lazaro:ISIT2015,lazaro:JSAC} not only binary Raptor codes are considered, but also $q$-ary Raptor codes. The bounds presented in this paper can be seen as an extension of the upper bound in \cite{schotsch:2013} to Raptor codes.

The rest of the paper is organized as follows. In Section~\ref{sec:prelim} some preliminary definitions are presented. Section~\ref{sec:perf_bound} presents the upper bounds on the probability of decoding failure for the case in which the outer code is deterministic. In Section~\ref{sec:ensemble} these bounds are extended to the case in which the outer code is drawn from a linear parity-check based ensemble. Numerical results are presented in Section~\ref{sec:numres}. Section~\ref{sec:Conclusions} presents the conclusions of our work.

\section{Preliminaries}\label{sec:prelim}

We consider Raptor codes constructed over $\mathbb {F}_{q}$ with an $(h,k)$ outer linear block code $\precodegeneric$.
We shall denote the $k$ input (or source) symbols of a Raptor code as ${\vecu=(\Raptorinput_1,~\Raptorinput_2,~\ldots, \Raptorinput_k)}$. The elements of $\vecu$ belong to $\mathbb {F}_{q}$.
 Out of the $k$ input symbols, the outer code generates a vector of $h$ intermediate symbols ${\vecv=(\Rintermsymbol_1,~\Rintermsymbol_2,~\ldots, \Rintermsymbol_h)}  \in \precodegeneric$. Denoting by $\Gp$ the employed generator matrix of the outer code, of dimension $(k \times h)$ and with elements in $\mathbb {F}_{q}$, the intermediate symbols can be expressed as
\[
\vecv = \vecu \Gp.
\]
These intermediate symbols serve as input to an LT encoder, which can generate an unlimited number of output
symbols, ${\mathbf{\Rosymb}=(\Rosymb_1, \Rosymb_2, \ldots, \Rosymb_n)}$, where $n$ can grow unbounded. Again, the elements of $\mathbf{\Rosymb}$ belong to $\mathbb {F}_{q}$.
For any $n$ the output symbols can be expressed as
\[
\mathbf{\Rosymb} = \vecv \GLT = \vecu \Gp \GLT
\]
where $\GLT$ is an $(h \times n)$ matrix whose elements belong to  $\mathbb {F}_{q}$. Each column of $\GLT$ is associated with $\Rosymb_i$.
More specifically,
each column of $\GLT$ is generated by first selecting an output  degree $d$ according to the degree distribution ${\Omega= (\Omega_1, \Omega_2, \ldots, \Omega_{\dmax})}$, and then selecting $d$ different indexes uniformly at random between $1$ and $h$. Finally, the elements of the column corresponding to these indexes are drawn independently and uniformly at random from $\mathbb {F}_{q} \backslash \{0\}$, while all other elements of the column are set to zero.

The output symbols $\mathbf{\Rosymb}$ are transmitted over a \ac{$q$-EC} at the output of which each transmitted symbol is either correctly received or erased.\footnote{The results developed in this paper remain valid regardless the statistic of the erasures introduced by the channel.}
We denote by $m$ the number of output symbols collected by the receiver of interest, and we express it as $m=k+\absoverhead$. Let us denote by ${\mathbf{\Rrosymb}=(\Rrosymb_1, \Rrosymb_2, \ldots, \Rrosymb_m)}$ the $m$ received output symbols. Denoting by $\mathcal{I} = \{i_1, i_2, \hdots, i_m \}$ the set of indices corresponding to the $m$ non-erased symbols, we have
\[
\Rrosymb_j = \Rosymb_{i_j}.
\]
An \ac{ML} decoder (for example, an inactivation decoder) proceeds by solving the linear system of equations
\[
\mathbf{\Rrosymb} = \vecu \Grx
\]
where
\begin{align}
\Grx = \Gp \GrxLT
\label{eq:sys_eq}
\end{align}
with $\GrxLT$ given by the $m$ columns of $\GLT$ with indices in $\mathcal{I}$.

Given a block code $\precodegeneric$ of length $h$ we shall denote its weight enumerator as $\weo = \{\weo_0, \weo_1 \hdots \weo_h\}$, where $\weo_i$ denotes the multiplicity of codewords of weight $i$.
Similarly, given an ensemble of block codes, all with the same length $h$, along with a probability distribution on the codes in the ensmble, we shall denote its average weight enumerator as ${\weoensemble = \{\weoensemble_0, \weoensemble_1 \hdots \weoensemble_h\}}$, where $\weoensemble_i$ denotes the expected multiplicity of codewords of weight $i$ of a code drawn randomly from the ensemble.

\section{Upper Bounds on the Error Probability}\label{sec:perf_bound}

The following theorem establishes an upper bound on the probability of decoding failure $\Pf$ under \ac{ML} decoding of a Raptor code constructed over $\mathbb {F}_{q}$ as a function of the receiver overhead $\absoverhead$.

\begin{theorem}\label{theorem:rateless}
Consider a Raptor code constructed over $\mathbb {F}_{q}$ with an $(h,k)$ outer code $\precodegeneric$ characterized by a weight enumerator $\weo$, and an inner \ac{LT} code with output degree distribution $\Omega$.
The probability of decoding failure under optimum erasure decoding given that ${m=k+\absoverhead}$ output symbols have been collected by the receiver can be upper bounded as
\[
\Pf  \leq \sum_{l=1}^h \weo_{\l} \pil^{k+\absoverhead}
\]
where  $\pil$ is the probability that a generic output symbol is equal to $0$ given that the vector $\vecv$ of intermediate symbols has Hamming weight $l$. The expression of $\pil$ is \cite{schotsch:2013}
\begin{align}
\pil &= \frac{1}{q} +  \frac{q-1}{q} \sum_{j=1}^{\dmax} \Omega_j    \frac{\krawt_j(l; h,q)}{\krawt_j(0; h,q)}
\label{eq:pl}
\end{align}
where $\krawt_j(l; h,q)$ is the Krawtchouk polynomial of degree $j$ with parameters $h$ and $q$.\footnote{The Krawtchouk polynomial of degree $j$ with parameters $n$ and $q$  is defined as \cite{MacWillimas77:Book}
\[
\krawt_k(x;n,q) = \sum_{j=0}^k (-1)^j \binom{x}{j} \binom{n-x}{k-j} (q-1)^{k-j}.
\]
}
\end{theorem}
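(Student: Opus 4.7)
The plan is a union-bound argument over the non-zero codewords of the outer code $\precodegeneric$. First I would recast the decoding-failure event: the ML decoder succeeds iff the linear system $\mathbf{\Rrosymb} = \vecu \Grx$ admits a unique solution, equivalently iff $\Grx = \Gp \GrxLT$ has full column rank $k$. Decoding therefore fails iff there exists a non-zero $\vecu' \in \mathbb{F}_q^k$ with $\vecu' \Grx = \mathbf{0}$; since $\Gp$ has full row rank (else $\precodegeneric$ would not be a genuine $(h,k)$ code), $\vecv' = \vecu' \Gp$ ranges bijectively over $\precodegeneric \setminus \{\mathbf{0}\}$ as $\vecu'$ ranges over $\mathbb{F}_q^k \setminus \{\mathbf{0}\}$. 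Failure is thus equivalent to the event that some non-zero codeword $\vecv' \in \precodegeneric$ satisfies $\vecv' \GrxLT = \mathbf{0}$.

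Applying the union bound and grouping codewords by Hamming weight yields
\[
\Pf \leq \sum_{\vecv' \in \precodegeneric \setminus \{\mathbf{0}\}} \Pr\!\left[\vecv' \GrxLT = \mathbf{0}\right] = \sum_{\l=1}^{h} \weo_\l \, \Pr\!\left[\vecv' \GrxLT = \mathbf{0} \mid \hw(\vecv') = \l\right].
\]
Next I would exploit that the columns of $\GLT$ are generated i.i.d.\ by the LT encoder, so the $m$ columns of $\GrxLT$ indexed by the non-erased set $\mathcal{I}$ remain mutually independent (the erasure pattern is produced by the channel and is independent of the encoder's randomness). This gives the factorization $\Pr[\vecv' \GrxLT = \mathbf{0}] = \Pr[\vecv' \g = 0]^{m}$, where $\g$ denotes a generic random column.

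Finally, the symmetry of the column construction --- invariance of the distribution of $\g$ under permutations of the $h$ intermediate-symbol indices, together with the uniform choice of nonzero coefficients in $\mathbb{F}_q \setminus \{0\}$ --- implies that $\Pr[\vecv' \g = 0]$ depends on $\vecv'$ only through $\hw(\vecv')$. Calling this common value $\pil$ and substituting above gives $\Pf \leq \sum_{\l=1}^{h} \weo_\l \, \pil^{k + \absoverhead}$, which is the claimed inequality. The closed-form expression for $\pil$ then follows, as carried out in \cite{schotsch:2013}, by conditioning on the output degree $j$ drawn from $\Omega$ and evaluating the probability that a random $\mathbb{F}_q$-linear combination of $j$ distinct coordinates of a weight-$\l$ vector vanishes; the resulting probability is a ratio of Krawtchouk polynomials.

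The main obstacle is verifying the two probabilistic claims carefully: independence of the surviving columns of $\GrxLT$ conditional on the erasure pattern, and the reduction of the per-column probability to a function of $\hw(\vecv')$ alone. Both are direct consequences of how the LT encoder samples its columns, so the heart of the argument is really a careful union bound together with the cited formula for $\pil$.
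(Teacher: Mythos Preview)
Your proposal is correct and follows essentially the same route as the paper: recast failure as the existence of a non-zero codeword $\vecv' \in \precodegeneric$ annihilated by $\GrxLT$, apply a union bound grouped by Hamming weight, factor over the $m$ i.i.d.\ received columns, and observe that the per-column probability depends only on $\hw(\vecv')$. The only notable difference is that the paper does not simply defer to \cite{schotsch:2013} for the closed form of $\pil$ but sketches its derivation explicitly --- conditioning first on the output degree $J=j$, then on the number $I=i$ of non-zero intermediate symbols among the $j$ selected ones, and invoking a lemma on the probability that a sum of $i$ i.i.d.\ uniform non-zero $\mathbb{F}_q$-elements vanishes --- before rewriting the result via Krawtchouk polynomials.
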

\begin{proof}
An optimum (e.g. inactivation) decoder solves the linear system of equations in \eqref{eq:sys_eq}. Decoding fails whenever the system does not admit a unique solution, that is, if and only if  $\rank(\Grx)<k$, i.e. if
${\exists\,  \vecu \in \mathbb {F}_q^k \backslash \{ \textbf{0}\} \,\, \text{s.t.} \,\, \vecu \Grx = \textbf{0}}$.
Consider two  vectors $\vecu \in \mathbb {F}_q^k, \vecv \in \mathbb {F}_q^h$. Let us define  $E_{\vecu}$ as the event $\vecu \mathbf{G}_o \GrxLT = \mathbf{0}$.
Similarly, we define $E_{\vecv}$ as the event  $\vecv \GrxLT = \mathbf{0}$. We have
\begin{align}
\Pf & =   \Pr\left\{ \small{\bigcup_{\vecu \in \mathbb {F}_q^k \backslash \{ \textbf{0}\}}} E_{\vecu}  \right\}  = \Pr\left\{ \small{\bigcup_{\vecv \in \precodegeneric \backslash \{ \textbf{0}\} }} E_{\vecv} \right\}
\label{eq:existence}
\end{align}
where we made use of the fact that due to linearity, the all zero intermediate word is only generated by the all zero input vector.

Developing \eqref{eq:existence} we have
\begin{align}
\Pf & = \Pr \left\{ \small{\bigcup_{l=1}^h} \,\, \small{ \bigcup_{\vecv \in  \mathbb \precodegeneric_l } } E_{\vecv}  \right\}
\label{eq:existence2}
\end{align}
where, by definition
\[
\precodegeneric_l = \left\{ \vecv \in \precodegeneric : w_H(\vecv) = l \right\}
\]
is the set of codewords in $\precodegeneric$ of Hamming weight $l$.

Let $L$ be a discrete random variable representing the Hamming weight of vector $\vecv \in \precodegeneric$. Moreover, let $J$ and $I$ be discrete random variables representing the number of intermediate symbols which are linearly combined to generate the generic output symbol $y$, and the number of non-zero such intermediate symbols, respectively. Note that $I \leq L$. We can upper bound \eqref{eq:existence2} as
\begin{align}
\Pf & \leq \sum_{l=1}^{h} \Pr \left\{ \small{ \bigcup_{\vecv \in  \mathbb \precodegeneric_l } } E_{\vecv}  \right\} 
\leq \sum_{l=1}^{h} \weo_{\l} \Pr \left\{ E_{\vecv} | L=l  \right\} \, .
\label{eq:existence3}
\end{align}
Observing that the output symbols are independent of each other, we have
\[
\Pr \left\{ E_{\vecv} | L=l \right\} = \pil^{k+\absoverhead}
\]
where $\pil = \Pr \{ y=0 | L=l\}$. An expression for $\pil$ may be obtained observing that
\begin{align*}
\pil &= \sum_{j=1}^{\dmax} \Pr \{ y=0 | L=l,J=j \} \Pr \{ J=j | L=l \} \\
      &\stackrel{(\mathrm{a})}{=} \sum_{j=1}^{\dmax} \Omega_j \Pr \{ y=0 | L=l,J=j \} \\
      &\stackrel{(\mathrm{b})}{=} \sum_{j=1}^{\dmax} \Omega_j \sum_{i=0}^{\min\{j,l\}} \Pr \{ y=0 | I=i \} \! \Pr \{ I=i | L=l, J=j \}
\end{align*}
where equality `$(\mathrm{a})$' is due to \mbox{$\Pr \{ J=j | L=l \} = \Pr \{ J=j \}$} $= \Omega_j$ and equality `$(\mathrm{b})$' to $\Pr \{ y=0 | L=l, J=j, I=i \} = \Pr \{ y=0 | I=i \}$. Letting \mbox{$\pifroml = \Pr \{ I=i | L=l, J=j \}$}, since the $j$ intermediate symbols are chosen uniformly at random by the LT encoder we have
\begin{align}\label{eq:neighbors}
\pifroml = \frac{ \binom{\l}{i} \binom{h-\l}{j-i} } { \binom{h}{j}} \, .
\end{align}
Let us denote $\Pr \{ y=0 | I=i \}$ by $\qi$ and let us observe that, due to the elements of $\Grx$ being \ac{i.i.d.} and uniformly drawn in $\mathbb {F}_{q} \setminus \{0\}$, on invoking Lemma~\ref{lemma:galois} in the Appendix\footnote{The proof in the Appendix is only valid for fields with characteristic $2$, the case of most interest for practical purposes. The proof of the general case is a trivial extension of Lemma~\ref{lemma:galois} in the Appendix.} we have
\begin{align}\label{eq:sum}
\qi =\frac{1}{q} \left( 1 + \frac{(-1)^i}{(q-1)^{i-1}}\right).
\end{align}
We conclude that $\pil$ is given by
\begin{align}
\pil &=  \sum_{j=1}^{\dmax} \Omega_j  \sum_{i=0}^{\min\{j,l\}} \pifroml  \, \qi
\end{align}
where $\pifroml$ and $\qi$ are given by \eqref{eq:neighbors} and \eqref{eq:sum}, respectively.

Expanding this expression and rewriting it using Krawtchouk polynomials and making use of the Chu-Vandermonde identity, one obtains \eqref{eq:pl}. This completes the proof.
\end{proof}

The following theorem makes the bound in Theorem~\ref{theorem:rateless} tighter for $q>2$. It is equivalent to Theorem~\ref{theorem:rateless} for $q=2$.
\begin{theorem}\label{lemma:bound_tight}
Consider a Raptor code constructed over $\mathbb {F}_{q}$ with an $(h,k)$ outer code $\precodegeneric$ characterized by a weight enumerator $\weo$, and an inner \ac{LT} with output degree distribution $\Omega$.
The probability of decoding failure under optimum erasure decoding given that ${m=k+\absoverhead}$ output symbols have been collected by the receiver can be upper bounded as
\[
\Pf  \leq \sum_{l=1}^h \frac{\weo_{\l}}{q-1} \pil^{k+\absoverhead}
\]
\end{theorem}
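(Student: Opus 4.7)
The plan is to revisit the proof of Theorem~\ref{theorem:rateless} and tighten the union bound step by exploiting a symmetry of $q$-ary linear codes that becomes trivial only when $q=2$. In the earlier proof, we wrote
\[
\Pf = \Pr\!\Biggl\{ \bigcup_{l=1}^{h}\, \bigcup_{\vecv \in \precodegeneric_l} E_{\vecv} \Biggr\}
\]
and then applied the union bound over all codewords $\vecv \in \precodegeneric_l$. For $q>2$, however, many of these events coincide, so the union bound overcounts by a known factor.

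The key observation is the following. Since $\precodegeneric$ is linear over $\mathbb{F}_q$, for every codeword $\vecv \in \precodegeneric$ and every $\alpha \in \mathbb{F}_q \setminus \{0\}$, the vector $\alpha \vecv$ is also a codeword, with the same Hamming weight as $\vecv$. Moreover, the event $E_{\alpha \vecv} = \{ \alpha \vecv \,\GrxLT = \mathbf{0}\}$ coincides with $E_{\vecv} = \{ \vecv \, \GrxLT = \mathbf{0}\}$, because multiplication by a nonzero scalar of $\mathbb{F}_q$ preserves the zero vector. Therefore the nonzero codewords of $\precodegeneric_l$ partition into equivalence classes under scalar multiplication, each of cardinality exactly $q-1$, and there are $\weo_l/(q-1)$ such classes.

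With this in hand, I would replace the union bound over $\precodegeneric_l$ by the union bound over a set of class representatives, one per equivalence class. Since $\Pr\{E_{\vecv} \mid L = l\} = \pil^{k+\absoverhead}$ depends only on the weight $l$ (as established in the proof of Theorem~\ref{theorem:rateless}), this yields
\[
\Pr\!\Biggl\{\bigcup_{\vecv \in \precodegeneric_l} E_{\vecv}\Biggr\} \;\leq\; \frac{\weo_l}{q-1}\, \pil^{k+\absoverhead}.
\]
Summing over $l$ from $1$ to $h$ completes the argument and produces the bound in the statement.

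There is no genuine obstacle here: all the work on $\pil$ has already been done in Theorem~\ref{theorem:rateless}, and the only new ingredient is the observation that scalar multiplication by $\mathbb{F}_q^\times$ acts freely on the nonzero codewords and leaves the events $E_{\vecv}$ invariant. The mild care required is to emphasize that this tightening is vacuous for $q=2$, where $q-1 = 1$ and each equivalence class is a singleton, so Theorem~\ref{lemma:bound_tight} reduces to Theorem~\ref{theorem:rateless} in the binary case, exactly as stated.
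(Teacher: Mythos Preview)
Your proposal is correct and follows exactly the paper's approach: the paper's proof simply states that the bound in Theorem~\ref{theorem:rateless} can be tightened by a factor $q-1$ because $\alpha \vecv$ is a codeword whenever $\vecv$ is, for all $\alpha \in \mathbb{F}_q \setminus \{0\}$. Your write-up is in fact more explicit than the paper's, spelling out that the events $E_{\vecv}$ and $E_{\alpha\vecv}$ coincide and that the nonzero codewords partition into scalar-multiple classes of size $q-1$.
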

\begin{proof}
The bound \eqref{eq:existence3} can be tightened by a factor $q-1$ exploiting the fact that for a linear block code $\precodegeneric$ constructed over $\mathbb {F}_{q}$, if $\mathbf{\Rosymb}$ is a codeword, $\alpha \mathbf{\Rosymb}$ is also a codeword, $\forall \alpha \in \mathbb F_{q} \backslash \{0\}$ \cite{Liva2013}.
\end{proof}
The upper bound in Theorem~\ref{lemma:bound_tight} also applies to LT codes. In that case, $\weo_{\l}$ is simply the total number of sequences of Hamming weight $l$ and length $k$,
\[
\weo_{\l}= \binom{k}{l} (q-1)^{l-1}.
\]
The upper bound obtained for LT codes coincides with the bound in \cite{schotsch:2013} (Theorem 1). 

\section{Case of Random Outer Codes from Linear Parity-Check Based Ensembles}\label{sec:ensemble}

Both Theorem~\ref{theorem:rateless} and Theorem~\ref{lemma:bound_tight} apply to the case of a specific outer code. Next we extend these results to the case of a random outer code drawn from an ensemble of codes. Specifically, we consider a parity-check based ensemble of outer codes, denoted by $\msr{C}^\text{o}$, defined by a random matrix of size $(h - k) \times h$ whose elements belong to $\mathbb F_q$. A linear block code of length $h$ belongs to $\msr{C}^\text{o}$ if and only if at least one of the instances of the random matrix is a valid parity-check matrix for it. Moreover, the probability measure of each code in the ensemble is the sum of the probabilities of all instances of the random matrix which are valid parity-check matrices for that code. Note that all codes in $\msr{C}^\text{o}$ are linear, have length $h$, and have dimension $k_\precodegeneric \geq k$.
In the following we use the expression ``Raptor code ensemble'' to refer to the set of Raptor codes obtained by concatenating an outer code belonging to the 
ensemble $\msr{C}^\text{o}$ with an \ac{LT} encoder having distribution $\Omega$. We shall denote this ensemble as $(\msr{C}^\text{o}, \Omega)$.

\begin{theorem}\label{corollary:rateless}
Consider a Raptor code ensemble $(\msr{C}^o, \Omega)$ and let $\weoensemble= \{ \weoensemble_0,\weoensemble_1,\dots,\weoensemble_h \}$ be the expected weight enumerator of
a  code $\precodegeneric$ that is randomly drawn from  $\msr{C}^o$, i.e., let ${\weoensemble_{l} = \Exp_{ \msr{C}^o }[A_l(\precodegeneric)]}$ for all $l \in \{0,1,\dots,h\}$. Let
\begin{align}
\barPf = \Exp_{  \msr{C}^o } [ \Pf(\precodegeneric)]
\label{eq:ensemble}
\end{align}
be the average probability of decoding failure of the Raptor code obtained by concatenating an instance of $\precodegeneric$ with the \ac{LT} encoder, under optimum erasure decoding and given that ${m=k+\absoverhead}$ output symbols have been collected by the receiver. Then
\[
\barPf  \leq  \sum_{l=1}^h \frac{\weoensemble_{\l}}{q-1}  \pil^{k+\absoverhead} \, .
\]
\end{theorem}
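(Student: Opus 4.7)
The plan is elementary: apply Theorem~\ref{lemma:bound_tight} realization by realization and then take the expectation over the ensemble, observing that on the right-hand side of that bound the only quantity depending on the drawn outer code $\precodegeneric$ is its weight enumerator. The key observation is that $\pil$, as given by \eqref{eq:pl}, depends only on the degree distribution $\Omega$, the intermediate weight $l$, and on $h$ and $q$; it is therefore deterministic with respect to the random choice of $\precodegeneric$ from $\msr{C}^o$, and so is the factor $1/(q-1)$.

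First, I would fix an arbitrary realization $\precodegeneric$ in the support of $\msr{C}^o$, denote its weight enumerator by $\{\weo_{l}(\precodegeneric)\}_{l=0}^{h}$, and invoke Theorem~\ref{lemma:bound_tight} on the Raptor code with outer code $\precodegeneric$ and LT distribution $\Omega$:
\[
\Pf(\precodegeneric) \leq \sum_{l=1}^{h} \frac{\weo_{l}(\precodegeneric)}{q-1}\, \pil^{k+\absoverhead}.
\]
Taking expectation with respect to $\msr{C}^o$ on both sides, using \eqref{eq:ensemble} and the monotonicity of expectation, yields
\[
\barPf \leq \Exp_{\msr{C}^o}\!\left[\,\sum_{l=1}^{h} \frac{\weo_{l}(\precodegeneric)}{q-1}\, \pil^{k+\absoverhead}\right].
\]
By linearity of expectation, and since $\pil$ and $q$ are constant under the ensemble randomness, the right-hand side becomes
\[
\sum_{l=1}^{h} \frac{\Exp_{\msr{C}^o}[\weo_{l}(\precodegeneric)]}{q-1}\, \pil^{k+\absoverhead} = \sum_{l=1}^{h} \frac{\weoensemble_{l}}{q-1}\, \pil^{k+\absoverhead},
\]
which is the asserted bound.

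There is essentially no technical obstacle; the argument is a one-line exchange of expectation and the finite sum appearing in Theorem~\ref{lemma:bound_tight}. The only subtle point worth flagging is that codes in $\msr{C}^o$ arising from a rank-deficient parity-check realization have dimension strictly larger than $k$, but this is immaterial to the plan since the proof of Theorem~\ref{lemma:bound_tight} proceeds via a union bound over the nonzero codewords of the outer code and uses nothing about its dimension beyond the multiplicities encoded in $\{\weo_l(\precodegeneric)\}$.
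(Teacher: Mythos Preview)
Your proposal is correct and follows essentially the same route as the paper: apply Theorem~\ref{lemma:bound_tight} to each realization of the outer code, take the expectation over $\msr{C}^o$, and invoke linearity of expectation using that $\pil$ (and $q$) are deterministic with respect to the outer-code draw. The only cosmetic difference is that the paper first records the exponent as $k_{\precodegeneric}+\absoverhead$ when invoking Theorem~\ref{lemma:bound_tight} and then uses $k_{\precodegeneric}\geq k$ together with $\pil\leq 1$ to pass to $\pil^{k+\absoverhead}$, whereas you argue directly (and correctly) that the union bound underlying Theorem~\ref{lemma:bound_tight} depends only on the weight enumerator and on the number $m=k+\absoverhead$ of received symbols.
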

\begin{proof}
Due to Theorem~\ref{lemma:bound_tight}  we may write
\begin{align}
\barPf \leq \Exp_{ \msr{C}^o } \left[ \sum_{l=1}^h \frac{\weo_{\l}(\precodegeneric) }{q-1} \pil^{k_\precodegeneric+\absoverhead} \right].
\label{eq:ensemble2}
\end{align}

For all  outer codes $\precodegeneric \in \msr{C}^\text{o}$ we have $k_\precodegeneric \geq k$. Since $\pil \leq 1$  we can write
\[
\pil^{k_\precodegeneric+\absoverhead} \leq \pil^{k+\absoverhead}
\]
which allows us to upper bound \eqref{eq:ensemble2} as
\[
\barPf \leq \Exp_{  \msr{C}^o } \left[ \sum_{l=1}^h \frac{\weo_{\l}(\precodegeneric) }{q-1} \pil^{k+\absoverhead} \right]= \sum_{l=1}^h \frac{\weoensemble_{\l}}{q-1}  \pil^{k+\absoverhead}
\]
where the last equality follows from linearity of expectation.
\end{proof}

\section{Numerical Results}\label{sec:numres}

All  results presented in this section use the \ac{LT} output degree distribution employed by standard R10 Raptor codes,  \cite{MBMS16:raptor,luby2007rfc},
\begin{align}
\Omega(\x) &= \sum_{j=1}^{\dmax} \Omega_j \x^j  \\
&= 0.0098\x + 0.4590\x^2+ 0.2110\x^3+0.1134\x^4 \\
&+ 0.1113\x^{10} + 0.0799\x^{11} + 0.0156\x^{40}.
\label{eq:dist_mbms}
\end{align}

\subsection{Binary Raptor Codes with Hamming Outer Codes}
In this section we consider binary Raptor codes with (deterministically known) Hamming outer codes.
The weight enumerator of a binary Hamming code of length $h=2^t-1$ and dimension $k=h-t$ can be derived easily using the  recursion
\[
(i+1)\, A_{i+1} + A_i + (h-i+1)\, A_{i-1}= \binom{h}{i}
\]
with $A_0=1$ and $A_1=0$ \cite{MacWillimas77:Book}. The weight distribution obtained from this recursion can then incorporated in Theorem~\ref{theorem:rateless} to derive the corresponding upper bound on the probability of Raptor decoding failure under optimum decoding.

\begin{figure}[t]
    \centering
    \includegraphics[height=7.8cm]{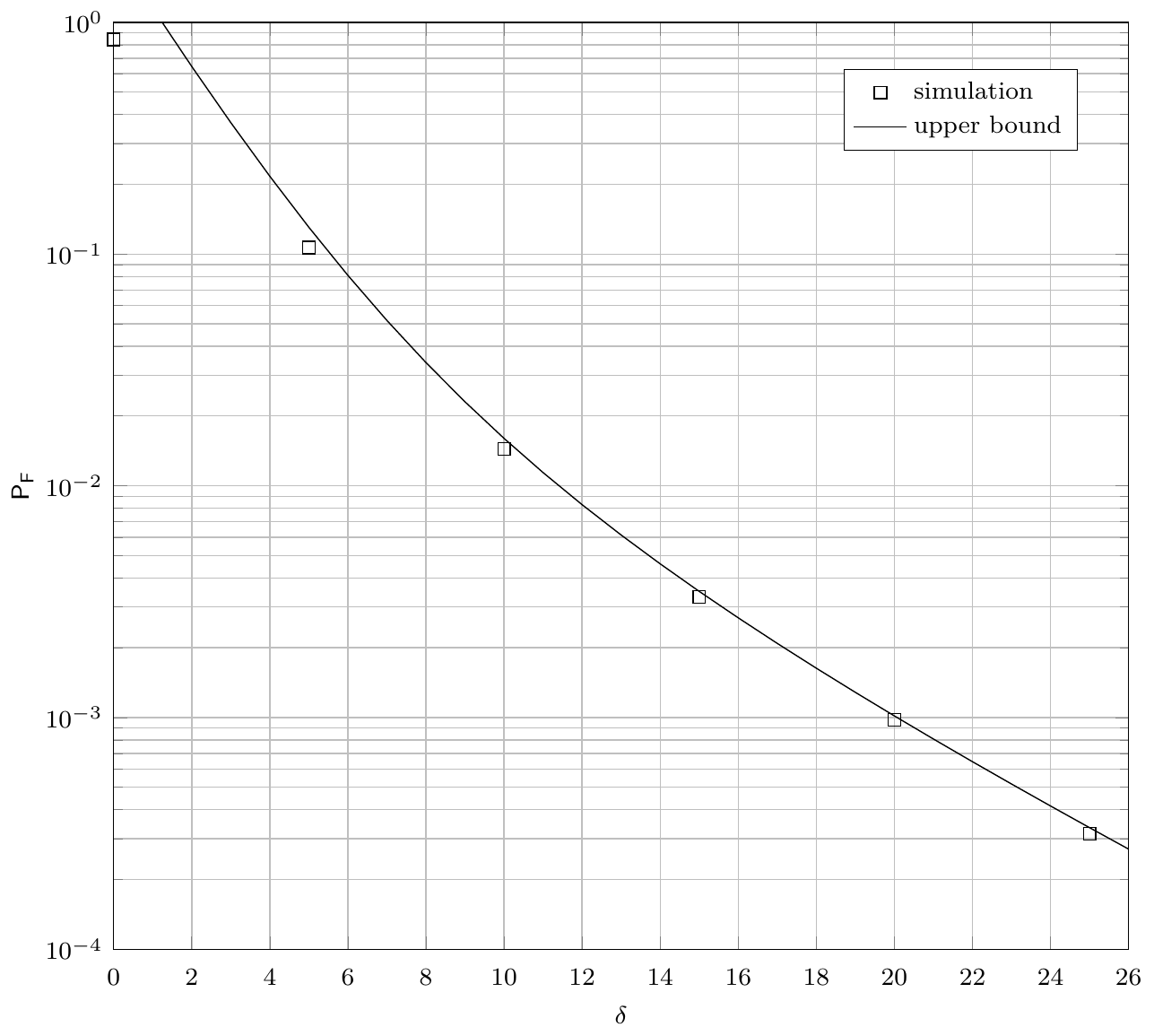}
    \caption[Probability of decoding failure $\Pf$ vs the absolute overhead for a Raptor with a $(63,57)$ Hamming outer code]{Probability of decoding failure $\Pf$ versus the absolute overhead for a binary Raptor code with a $(63,57)$ Hamming outer code. The solid line denotes the upper bound on the probability of decoding failure expressed  by Theorem~\ref{theorem:rateless}. The markers denote simulation results.}\label{fig:Hamming_sim}
\end{figure}

Figure~\ref{fig:Hamming_sim} shows the decoding failure rate for a binary Raptor code using a $(63,57)$ binary Hamming outer code as a function of the absolute overhead, $\absoverhead$.  The upper bound established in Theorem 1 is also shown.
In order to obtain the values of failure rate, for each $\absoverhead$ value Monte Carlo simulations were run until $200$ errors were collected using inactivation decoding. It can be observed how the upper bound is tight.

\subsection{Linear Random Outer Code}
In this subsection, we consider a $(\msr{C}^o, \Omega)$  Raptor code ensemble constructed over $\mathbb F_q$, where the \ac{LT} distribution $\Omega$ is the one defined in \eqref{eq:dist_mbms} and where $\msr{C}^o$ is the uniform parity-check ensemble, with
  parity-check matrix of size $(h-k) \times h$ and characterized by \ac{i.i.d.} entries with uniform distribution in $\mathbb F_q$. The expected multiplicity of codewords of weight $l$ for an outer code drawn randomly in $\msr{C}^o$ according to the described procedure is known to be
\begin{align}
\weoensemble_{\l} = \binom{h}{\l} q^{-(h-k)}  (q-1)^l.
\label{eq:wef_random}
\end{align}

In order to obtain the experimental values of decoding failure rate, $6000$ different outer codes were generated. For each outer code and for each overhead value $10^3$ inactivation decoding attempts were carried out. The average  failure rate was calculated by  averaging the failure rates of the individual Raptor codes.
In order to select the outer code an $(h-k)\times h$ parity check matrix was selected at random by generating each of its elements according to a uniform distribution in $\mathbb F_{q}$.

In Figure~\ref{fig:pf_k_64_h70} we show simulation results for $k=64$ and $h=70$. Two different $(\msr{C}^o, \Omega)$ Raptor code ensembles were considered, one constructed over $\mathbb F_{2}$ and one constructed over $\mathbb{F}_4$. We can observe how in both cases the bounds hold and are tight except for very small values of $\absoverhead$.
\begin{figure}[th]
    \centering
    \includegraphics[height=7.8cm]{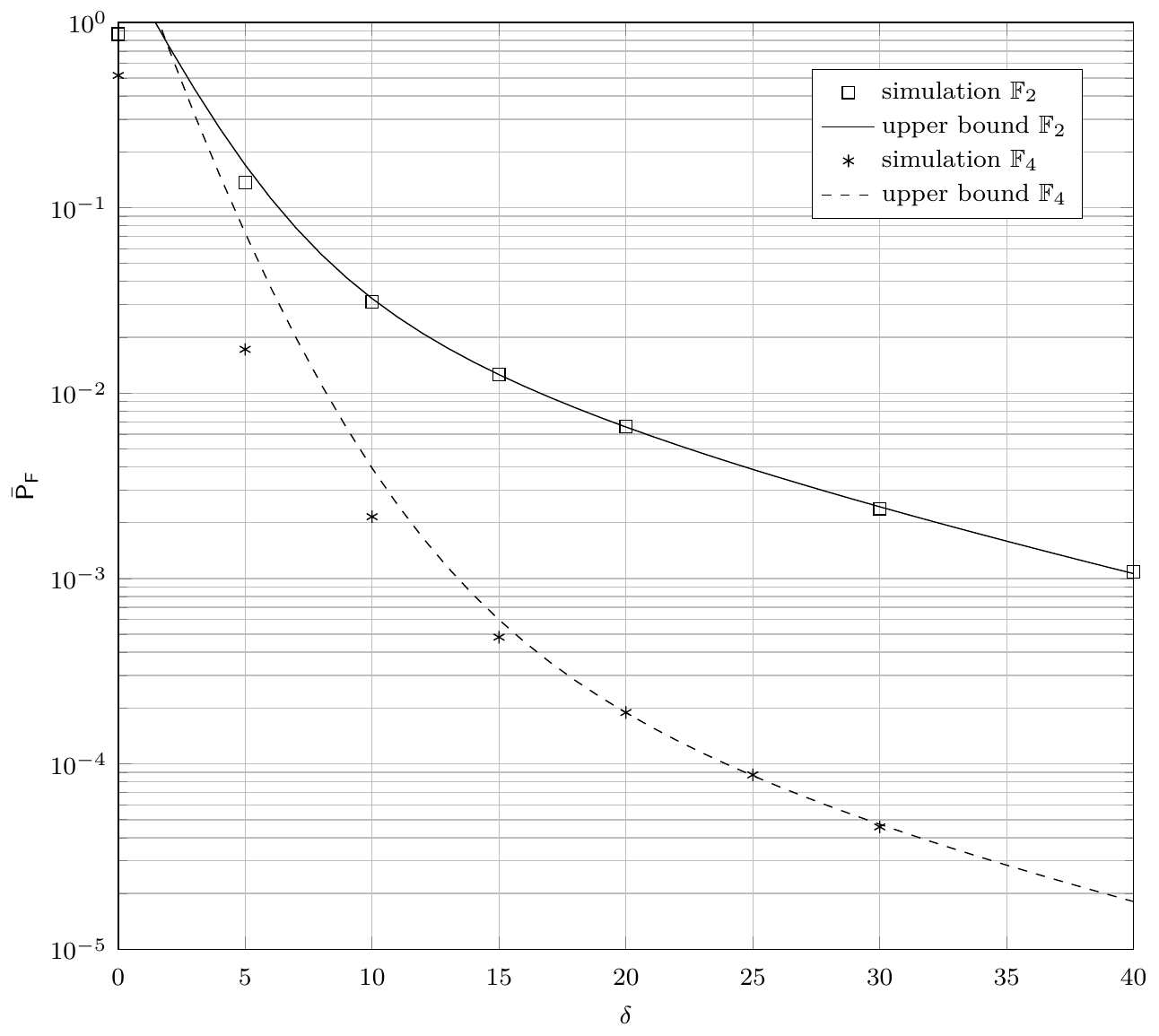}
    \caption{Expected probability of decoding failure $\barPf$ vs absolute overhead for Raptor code ensembles where the outer code is drawn randomly from the uniform parity-check ensemble. The solid and dashed lines denote the upper bounds on the average probability of decoding failure for
    the ensembles constructed over $\mathbb{F}_2$ and $\mathbb{F}_4$ respectively.
    The  markers denote simulation results.}\label{fig:pf_k_64_h70}
\end{figure}

\section{Conclusions}\label{sec:Conclusions}
In this paper we have consider Raptor codes under \acs{ML} decoding.
We have derived an upper bound on the probability of decoding failure of Raptor codes with generic  $q$-ary outer codes. This bound is general and only requires the knowledge of the weight enumerator of the outer code. The bound also applies to ensembles of Raptor codes where the outer code is randomly selected from an ensemble.
The bound is shown to be tight, specially in the error floor, by means of simulations.


\appendix \label{sec:appendix}
The following lemma is used in the proof of Theorem~\ref{theorem:rateless}.

\begin{lemma}\label{lemma:galois}
Let $X_1$, $X_2$ ... $X_l$  be discrete i.i.d random variables uniformly distributed over $\mathbb F_{2^m} \backslash \{0\}$. Then
\[
\Pr \{X_1 + X_2+ \hdots + X_l = 0 \}= \frac{1}{q} \left( 1 + \frac{(-1)^i}{(q-1)^{i-1}}\right)
\]
where $q=2^m$.
\end{lemma}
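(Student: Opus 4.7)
The plan is to prove the identity by induction on $l$, exploiting the independence of the $X_i$'s together with the simplification afforded by characteristic $2$. The base case $l=1$ is immediate: since $X_1 \in \mathbb{F}_{2^m}\setminus\{0\}$ the event $\{X_1=0\}$ has probability zero, and the right-hand side of the claimed formula evaluates to $\frac{1}{q}\bigl(1 + \frac{-1}{(q-1)^{0}}\bigr)=0$, in agreement.

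For the inductive step, let $p_l$ denote the probability on the left-hand side and set $S_{l-1} := X_1 + \cdots + X_{l-1}$. The observation to use is that in characteristic $2$ we have $-s = s$ for every $s \in \mathbb{F}_q$, so the event $\{X_1+\cdots+X_l=0\}$ coincides with $\{X_l = S_{l-1}\}$. I would then condition on the value of $S_{l-1}$ and use that $X_l$ is independent of $S_{l-1}$ and uniform over the $q-1$ nonzero elements. If $S_{l-1}=0$ (probability $p_{l-1}$) then $\Pr\{X_l = S_{l-1}\} = \Pr\{X_l=0\} = 0$, whereas if $S_{l-1} = s \neq 0$ (probability $1-p_{l-1}$) then $\Pr\{X_l = s\} = 1/(q-1)$ irrespective of the specific nonzero value. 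This yields the one-step recursion
\[
p_l \;=\; \frac{1 - p_{l-1}}{q-1}.
\]

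Plugging in the inductive hypothesis $p_{l-1} = \frac{1}{q}\bigl(1 + \frac{(-1)^{l-1}}{(q-1)^{l-2}}\bigr)$, writing $1 - p_{l-1} = \frac{q-1}{q} - \frac{(-1)^{l-1}}{q(q-1)^{l-2}}$, and dividing by $q-1$ gives $p_l = \frac{1}{q} - \frac{(-1)^{l-1}}{q(q-1)^{l-1}} = \frac{1}{q}\bigl(1 + \frac{(-1)^l}{(q-1)^{l-1}}\bigr)$, closing the induction. The only obstacle is routine algebraic bookkeeping in this final simplification; conceptually the entire proof collapses to conditioning on the partial sum. As a remark, a cleaner route that works for arbitrary characteristic would use additive characters of $\mathbb{F}_q$: the characteristic function of $X_i$ evaluates to $-1/(q-1)$ on every nontrivial character, so by independence and character orthogonality $p_l = \frac{1}{q}\sum_{\chi} \mathbb{E}[\chi(X_1)]^l = \frac{1}{q}\bigl(1 + (q-1)\cdot(-1/(q-1))^l\bigr)$, which recovers the same expression without invoking $-s = s$.
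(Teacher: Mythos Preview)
Your inductive proof is correct. The recursion $p_l = (1-p_{l-1})/(q-1)$ together with $p_1=0$ indeed forces the claimed closed form, and your algebra checks out. One small remark: the appeal to characteristic $2$ (to get $-s=s$) is not actually needed even for your approach. Since $X_l$ is independent of $S_{l-1}$ and uniform on $\mathbb{F}_q\setminus\{0\}$, you have $p_l=\Pr\{X_l=-S_{l-1}\}$, which equals $0$ when $S_{l-1}=0$ and $1/(q-1)$ when $S_{l-1}\neq 0$, because $-s\neq 0$ iff $s\neq 0$ in any field. So your main argument already works for arbitrary $q$.

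The paper takes a different route: it identifies the additive group of $\mathbb{F}_{2^m}$ with $\mathbb{Z}_2^m$, writes $P_W$ as the $l$-fold convolution of $P_X$, and applies the $m$-dimensional two-point DFT. Since $\hat P_X(t)=1$ at $t=0$ and $-1/(q-1)$ elsewhere, one gets $\hat P_W(t)=\hat P_X(t)^l$ and then $P_W(0)=\frac{1}{q}\sum_t \hat P_W(t)$, yielding the formula directly. This is exactly the additive-character argument you sketch in your closing remark; so your aside is in fact the paper's proof. What your inductive approach buys is elementarity (no transforms or characters) and, as noted, it already covers general $q$; what the Fourier/character approach buys is a one-line non-recursive derivation and a clearer structural explanation of why the answer has the form $\frac{1}{q}\bigl(1+(q-1)(-1/(q-1))^l\bigr)$.
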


\begin{proof}
Observe that the additive group of $\mathbb F_{2^m}$ is isomorphic to the vector space $\mathbb Z_2^m$. Thus, we may let
$X_1$, $X_2$ ... $X_l$  be i.i.d random variables with uniform probability mass function over the vector space $\mathbb Z_2^m \backslash \{0\}$.

Let us introduce the auxiliary random variable
\[
W := X_1 + X_2+ \hdots + X_l
\]
and let us denote by $P_W(w)$ and by $P_X(x)$ the probability mass functions of $W$ and $X_i$, respectively, where
\[
P_X(x) =
\begin{cases}
 0 & \text{if } x=0 \\
 \frac{1}{q-1} & \text{otherwise.}
 \end{cases}
\]
Due to independence we have
\[
P_W = P_X \ast  P_X \ast \hdots \ast P_X
\]
which, taking the $m$-dimensional  two-points \ac{DFT} $\msr{J} \{\cdot\}$ of both sides, yields

\[
\msr J \{ P_W(w) \} =  \left( \msr J \{ P_X(x) \} \right)^l.
\]
Next, since
\[
\hat P_X(t) := \msr J \{ P_X(x) \}=
\begin{cases}
 1 & \text{if } t=0 \\
 \frac{-1}{q-1} & \text{otherwise}
 \end{cases}
\]
we have
\[
\hat P_W(t) := \msr J \{ P_W(w) \} =
\begin{cases}
 1 & \text{if } t=0 \\
 \frac{(-1)^l}{(q-1)^l} & \text{otherwise.}
 \end{cases}
\]
We are interested in  $P_W(0)$ whose expression corresponds to
\begin{equation}
 P_W(0) = \frac{1}{q} \sum_t \hat P_W(t) = \frac{1}{q} + \frac{1}{q} (q-1)  \frac{(-1)^l}{(q-1)^l}
\end{equation}
from which the statement follows.
\end{proof}

The result in this lemma appears in \cite{schotsch:2013}. However, the proof in \cite{schotsch:2013} uses a different approach based on a known result on the number of closed walks of length $l$ in a complete graph of size $q$ from a fixed but arbitrary vertex back to itself.

\section*{Acknowledgements}
This work was supported in part by {ESA/ESTEC} under Contract No. {4000111690/14/NL/FE} ``NEXCODE''.

\bibliographystyle{IEEEtran}
\bibliography{IEEEabrv,references}

\end{document}